\newcommand{\dd}{\textup{d}}
\def\E{\mathbb{E}}
\def\P{\mathbb{P}}
\def\R{\mathbb{R}}
\def\c{\mathbf{c}}
\def\p{\mathbf{p}}
\def\r{\mathbf{r}}
\def\Dalt{\frac{\partial^{1-\alpha}}{\partial t^{1-\alpha}}}
\def\D{\prescript{}{0}D_{t}^{1-\alpha}}
\def\DD{\mathcal{D}_{t}}
\def\L{\mathcal{L}_{x}}
\def\domain{V}
\newcommand{\Markov}[2]{\underset{#1}{\overset{#2}{\rightleftharpoons}}}
\newtheorem{theorem}{Theorem}
\theoremstyle{plain}
\theoremstyle{remark}
\begin{document}


\title[]{Anomalous reaction-diffusion equations for linear reactions}


\author{Sean D. Lawley}
\email[]{lawley@math.utah.edu}
\affiliation{University of Utah, Department of Mathematics, Salt Lake City, UT 84112 USA}


\date{\today}

\begin{abstract}
Deriving evolution equations accounting for both anomalous diffusion and reactions is notoriously difficult, even in the
simplest cases. In contrast to normal diffusion, reaction kinetics cannot be incorporated into evolution equations modeling subdiffusion by merely adding reaction terms to the equations describing spatial movement. A series of previous works derived fractional reaction-diffusion equations for the spatiotemporal evolution of particles undergoing subdiffusion in one space dimension with linear reactions between a finite number of discrete states. In this paper, we first give a short and elementary proof of these previous results. We then show how this argument gives the evolution equations for more general cases, including subdiffusion following any fractional Fokker-Planck equation in an arbitrary $d$-dimensional spatial domain with time-dependent reactions between infinitely many discrete states. In contrast to previous works which employed a variety of technical mathematical methods, our analysis reveals that the evolution equations follow from (i) the probabilistic independence of the stochastic spatial and discrete processes describing a single particle and (ii) the linearity of the integro-differential operators describing spatial movement. We also apply our results to systems combining reactions with superdiffusion.
\end{abstract}

\pacs{}

\maketitle

\section{\label{intro}Introduction}

The signature of a normal diffusive process is that the mean-squared displacement grows linearly in time. That is, if $X(t)$ denotes the one-dimensional position of the diffusive particle at time $t\ge0$, then
\begin{align}\label{normal}
\E\big[\big(X(t)-X(0)\big)^{2}\big]
\propto t,
\end{align}
where $\E$ denotes expected value. However, the mean-squared displacement in complex systems often deviates from the linear behavior in \eqref{normal} and instead grows as a power law,
\begin{align}\label{alpha}
\E\big[\big(X(t)-X(0)\big)^{2}\big]
\propto t^{\alpha},\quad \alpha>0,
\end{align}
in a phenomenon called anomalous diffusion if $\alpha\neq1$. Subdiffusion is defined by \eqref{alpha} with $\alpha<1$ and has been observed in various systems, including charge transport in amorphous semiconductors \cite{scher1975}, subsurface hydrology \cite{berkowitz2002}, and the transport of a bead through a polymer network \cite{amblard1996}. In addition, subdiffusive motion is ubiquitous in cell biology, where it is believed to result from macromolecular crowding \cite{golding2006, hofling2013}. Superdiffusion is defined by \eqref{alpha} with $\alpha>1$ and has been observed in animal movement \cite{klafter2005} and in active transport inside cells \cite{caspi2000}. 

Three common mathematical models for subdiffusion are the continuous-time random walk model, fractional Brownian motion, and random walks on fractal and disordered systems \cite{hofling2013}. In a continuum limit, the { standard} continuous-time random walk model {with independent jump length and waiting time distributions} yields the fractional diffusion equation \cite{metzler2000},
\begin{align}\label{fde}
\frac{\partial}{\partial t}c(x,t)
=\D K_{\alpha}\frac{\partial^{2}}{\partial x^{2}} c(x,t),\quad x\in\R,\,t>0,
\end{align}
for the subdiffusive chemical concentration $c(x,t)$ at position $x$ at time $t$. In \eqref{fde}, the parameter $K_{\alpha}>0$ is the generalized diffusivity (with dimensions $(\text{length})^{2}(\text{time})^{-\alpha}$) and $\D$ is the Riemann-Liouville fractional derivative \cite{samko1993}, defined by
\begin{align}\label{rl}
\D f(t)
=\frac{1}{\Gamma(\alpha)}\frac{\dd}{\dd t}\int_{0}^{t}\frac{f(s)}{(t-s)^{1-\alpha}}\,\dd s,
\end{align}
where $\Gamma(\alpha)$  is the Gamma function. Note that $\D$ is sometimes denoted by $\Dalt$. {As a technical aside, the operator appearing in the derivation of \eqref{fde} is actually the Gr{\"u}nwald-Letnikov derivative, but this operator is equivalent to \eqref{rl} for sufficiently smooth functions \cite{podlubny1998}.}

Generalizing \eqref{fde}, fractional Fokker-Planck equations model the spatiotemporal evolution of subdiffusive molecules under the influence of an external force \cite{metzler1999}. A fractional Fokker-Planck equation takes the form
\begin{align}\label{ffpec}
\frac{\partial}{\partial t}c(x,t)
=\D \L c(x,t),\quad x\in\domain\subseteq\R^{d},\,t>0,
\end{align}
where $\domain\subseteq\R^{d}$ is a $d$-dimensional spatial domain and $\L$ is the forward Fokker-Planck operator,
\begin{align}\label{fpo}
\begin{split}
\L f(x)
&:=-\sum_{i=1}^{d}\frac{\partial}{\partial x_{i}}\big[\mu_{i}(x)f(x)\big]\\
&+\frac{1}{2}\sum_{i=1}^{d}\sum_{k=1}^{d}\frac{\partial^{2}}{\partial x_{i}\partial x_{k}}
\Big[\big({\sigma}(x){\sigma}(x)^{\top}\big)_{i,k}f(x)\Big], 
\end{split}
\end{align}
where $\mu(x):\overline{\domain}\mapsto\R^{d}$ is the external force (drift) vector and $
\sigma(x):\overline{\domain}\mapsto\R^{d\times m}$ describes the space-dependence and anisotropy in the diffusivity. Of course, if $\alpha=1$, then $\D$ is the identity operator and \eqref{ffpec} reduces to the familiar equation of integer order,
\begin{align}\label{fpec}
\frac{\partial}{\partial t}c(x,t)
=\L c(x,t). 
\end{align}

A fundamental and now longstanding question is how to model reaction kinetics for subdiffusive molecules (see the review \cite{nepomnyashchy2016} and \cite{hornung2005, gafiychuk2008, boon2012, angstmann2013, kosztolowicz2013, hansen2015, straka2015, dossantos2019, zhang2019, li2019}). In the classical case of normal diffusion, reaction terms can simply be added to the evolution equations describing spatial movement. More precisely, consider the vector of $n$ chemical concentrations,
\begin{align}\label{c}
\c(x,t)=(\c_{i}(x,t))_{i=1}^{n}\in\R^{n},
\end{align}
where $\c_{i}(x,t)$ denotes the concentration of species $i$ at position $x\in\R^{d}$ at time $t\ge0$. In the absence of spatial movement, suppose the concentrations obey the mean-field reaction equations
\begin{align}\label{reactions}
\frac{\partial}{\partial t}\c
=\mathbf{f}(\c),
\end{align}
where $\mathbf{f}:\R^{n}\mapsto\R^{n}$. In the case of normal diffusion where each chemical species moves by \eqref{fpec}, one incorporates the reaction kinetics \eqref{reactions} into spatiotemporal evolution equations by the simple addition of $\mathbf{f}(\c)$ to the righthand side,
\begin{align}\label{simpleform}
\frac{\partial}{\partial t}\c
=\L\c+\mathbf{f}(\c).
\end{align}
However, this simple procedure fails for subdiffusion. Indeed, it was shown that the following attempt to combine subdiffusion with degradation at rate $\lambda>0$, 
\begin{align*}
\frac{\partial}{\partial t}c
=\D K_{\alpha}\frac{\partial^{2}}{\partial x^{2}} c-\lambda c,\quad x\in\R,\,t>0,
\end{align*}
leads to an unphysical negative concentration, $c(x,t)<0$ \cite{henry2006}.

In a series of important works \cite{sokolov2006, henry2006, schmidt2007, langlands2008}, evolution equations were derived for certain subdiffusive processes with linear reactions. In \cite{sokolov2006}, the equations were derived for pure subdiffusion in $\R$ with an irreversible reaction between $n=2$ chemical species. Equivalent equations were then derived in \cite{henry2006} and \cite{schmidt2007} using different formalisms. These results were generalized in \cite{langlands2008} to allow reversible reactions between any finite number $n$ of chemical species. In particular, in the case that (i) the reactions in \eqref{reactions} are linear,
\begin{align*}
\mathbf{f}(\c)
=R\c,
\end{align*}
where $R\in\R^{n\times n}$ is a constant reaction rate matrix, and (ii) each chemical species moves by the one-dimensional fractional diffusion equation in \eqref{fde}, it was found that \cite{langlands2008}
\begin{align}\label{langlands}
\frac{\partial}{\partial t}\c
=K_{\alpha}e^{{{R}} t}\D\Big[e^{-{{R}} t}\frac{\partial^{2}}{\partial x^{2}}\c\Big]
+{{R}} \c,\quad x\in\R,
\end{align}
where $e^{Rt}$ is the matrix exponential. In contrast to the simple form in \eqref{simpleform} with decoupled reaction and movement terms, notice that the reactions modify the movement term in \eqref{langlands}.  Interestingly, for the case of L\'{e}vy flights with an irreversible reaction between $n=2$ species, it was shown in \cite{schmidt2007} that the reaction-superdiffusion equations have the usual decoupling of reaction and movement terms. The results in \cite{sokolov2006, schmidt2007, henry2006, langlands2008} were derived using continuous-time random walks and Fourier-Laplace transform theory.

In this paper, we first give a short and elementary proof of \eqref{langlands}. We then show how this argument gives the evolution equations for more general cases, including subdiffusion following any fractional Fokker-Planck equation in an arbitrary $d$-dimensional spatial domain with time-dependent reactions between infinitely many discrete states. This analysis reveals that the evolution equations follow from (i) the probabilistic independence of the stochastic spatial and discrete processes describing a single particle and (ii) the linearity of the integro-differential operators describing spatial movement. In addition, under mild assumptions on initial and boundary conditions, the evolution equations imply that the spatial and discrete processes are independent. That is, under some mild conditions, the evolution equations hold if and only if the spatial and discrete processes are independent.

The rest of the paper is organized as follows. In section~\ref{simplifiedsetting}, we give a simple argument that yields \eqref{langlands}. In section~\ref{moregeneralsetting}, we generalize this argument to yield the evolution equations describing more complicated spatial and discrete processes. In section~\ref{examples}, we apply this more general result to some examples. We conclude by discussing our results and highlighting future directions.

\section{Simplified setting\label{simplifiedsetting}}

We first consider a setup that is equivalent to the main problem considered in \cite{langlands2008}. Assume $\{J(t)\}_{t\ge0}$ is a continuous-time Markov jump process on the finite state space $\{1,\dots,n\}$. Suppose the matrix $R\in\R^{n\times n}$ contains the transition rates, meaning the distribution of $J(t)$ satisfies the linear ordinary differential equation,
\begin{align}\label{ode}
\frac{\dd}{\dd t}\r
={{R}} \r,
\end{align}
where $\r(t)$ is the vector of probabilities,
\begin{align}\label{vp}
\r(t)
=(\r_{i}(t))_{i=1}^{n}
:=\big(\P(J(t)=i)\big)_{i=1}^{n}
\in\R^{n}.
\end{align}
Of course, the solution to \eqref{ode} is the matrix exponential,
\begin{align}\label{me}
\r(t)
=e^{{{R}} t}\r(0),\quad t\ge0.
\end{align}
In the language of Markov chain theory, $R$ is the forward operator and the transpose $R^{\top}$ is the backward operator (i.e.\ $R^{\top}$ is the infinitesimal generator \cite{norris1998}).

Assume that $\{X(t)\}_{t\ge0}$ is a one-dimensional subdiffusive process taking values in $\R$. Let $q(x,t)$ denote the probability density that $X(t)=x$ and assume that it satisfies the fractional diffusion equation,
\begin{align}\label{ffp}
\frac{\partial}{\partial t}q
=\DD \L q,\quad x\in\R,\,t>0,
\end{align}
where
\begin{align}\label{sub0}
\DD=\D,\quad \alpha\in(0,1),
\end{align}
is the fractional derivative of Riemann-Liouville type given in \eqref{rl}, and
\begin{align}\label{sub1}
\L
={K_{\alpha}}\frac{\partial^{2}}{\partial x^{2}},
\end{align}
is the one-dimensional Laplacian with generalized diffusivity ${K_{\alpha}>0}$. We use the subscripts in \eqref{sub0} and \eqref{sub1} to emphasize that $\DD$ acts only on the time variable $t$ and $\L$ acts only on the space variable $x$.

Langlands et al.\ \cite{langlands2008} developed a mesoscopic continuous-time random walk argument to derive the following system of fractional reaction-diffusion equations,
\begin{align}\label{langlandsp}
\frac{\partial}{\partial t}\p
=e^{{{R}} t}\DD\Big[e^{-{{R}} t}\L \p\Big]
+{{R}} \p,\quad x\in\R,\,t>0,
\end{align}
for the joint density
$\p(x,t)=(\p_{i}(x,t))_{i=1}^{n}$, where
\begin{align*}
\p_{i}(x,t)\,\dd x
=\P(X(t)=x,\,J(t)=i). 
\end{align*}
The derivation in \cite{langlands2008} implicitly assumed that $X$ and $J$ are independent processes.

We now prove that \eqref{langlandsp} follows immediately from \eqref{ode}, \eqref{ffp}, the independence of $X$ and $J$, and the linearity of $\DD$ and $\L$. Note first that independence ensures that the joint probability distribution is the product of the individual distributions,
\begin{align}\label{pqr}
\begin{split}
\p_{i}(x,t)\,\dd x
&=\P(X(t)=x,\, J(t)=i)\\
&=\P(X(t)=x)\P(J(t)=i)\\
&=q(x,t)\r_{i}(t)\,\dd x.
\end{split}
\end{align}
Therefore, differentiating $\p(x,t)=q(x,t)\r(t)$ with respect to time and using \eqref{ode} and \eqref{ffp} yields
\begin{align}\label{calc}
\begin{split}
\frac{\partial}{\partial t}\p(x,t)
&=\r(t)\DD\Big[\L q(x,t)\Big]+q(x,t){{R}} \r(t)\\
&=\r(t)\DD\Big[\L q(x,t)\Big]+{{R}} \p(x,t).
\end{split}
\end{align}
Using \eqref{me}, the first term in the righthand side of \eqref{calc} becomes
\begin{align}\label{calc2}
\begin{split}
\r(t)\DD\Big[\L q(x,t)\Big]
&=e^{{{R}} t}\r(0)\DD\Big[\L q(x,t)\Big]\\
&=e^{{{R}} t}\DD\Big[\r(0)\L q(x,t)\Big]\\
&=e^{{{R}} t}\DD\Big[e^{-{{R}} t}\r(t)\L q(x,t)\Big]\\
&=e^{{{R}} t}\DD\Big[e^{-{{R}} t}\L \p(x,t)\Big].
\end{split}
\end{align}
Combining \eqref{calc} and \eqref{calc2} yields \eqref{langlandsp}.

\section{More general setting\label{moregeneralsetting}}

It is easy to see that the calculation in \eqref{pqr}-\eqref{calc2} and the resulting evolution equation in \eqref{langlandsp} holds in much greater generality. First, the spatial domain need not be $\R$, and we will instead take it to be any $d$-dimensional open set $\domain\subseteq\R^{d}$ with $d\ge1$. Second, the operator $\L$ need not be the Laplacian and the operator $\DD$ need not be the Riemann-Liouville fractional derivative. Instead, we will take $\L$ to be any linear operator acting on functions of space $x\in V\subseteq\R^{d}$ and $\DD$ to be any linear operator acting on functions of time $t\in[0,\infty)$. That is, if $\varphi(t),\psi(t)$ are real-valued functions of time $t\in[0,\infty)$ {in the domain of $\DD$} and $f(x),g(x)$ are real-valued functions of space $x\in\domain\subseteq\R^{d}$ {in the domain of $\L$}, then we assume
\begin{align}\label{linear}
\begin{split}
\L(\varphi f+\psi g)
&=\varphi \L f+\psi \L g,\\
\DD(\varphi f+\psi g)
&=f \DD \varphi+g \DD \psi.
\end{split}
\end{align}
Third, the jump process $J(t)$ need not have constant jump rates or a finite state space. We summarize this in the following theorem. {Equation~\eqref{evo} in Theorem~\ref{thmind} and its proof is the main result of this paper.}

\begin{theorem}\label{thmind}
Assume $\{J(t)\}_{t\ge0}$ is a stochastic process on the possibly infinite, countable state space, $\{1,2,\dots,n\}$, where
\begin{align*}
n\in\mathbb{N}\cup\{\infty\}.
\end{align*}
Suppose the distribution,
\begin{align*}
\r(t):=(\r_{i}(t))_{i=1}^{n}:=(\P(J(t)=i))_{i=1}^{n}\in\R^{n},
\end{align*}
satisfies
\begin{align}\label{rode}
\frac{\dd}{\dd t}\r(t)
=R(t)\r(t),\quad t>0,
\end{align}
for some function $R(t):[0,\infty)\mapsto\R^{n\times n}$, and 
\begin{align}\label{sop}
\r(t)
=\Psi(t)\r(0),\quad t\ge0,
\end{align}
where $\Psi(t):(-\infty,\infty)\mapsto\R^{n\times n}$ satisfies
\begin{align}\label{invert}
\Psi(t)\Psi(-t)
=\textup{id},\quad t\in(-\infty,\infty),
\end{align}
where $\textup{id}$ is the identity operator.

Assume $\{X(t)\}_{t\ge0}$ is a stochastic process taking values in the closure of the open set $\domain\subseteq\R^{d}$ whose probability density,
\begin{align*}
q(x,t)\,\dd x
=\P(X(t)=x),
\end{align*}
satisfies
\begin{align}\label{qeqn}
\frac{\partial}{\partial t}q=\DD\L q,\quad x\in\domain\subseteq\R^{d},\,t>0,
\end{align}
where $\L$ and $\DD$ are linear operators satisfying \eqref{linear}.

If $X$ and $J$ are independent, then the joint probability density $\p(x,t)=(\p_{i}(x,t))_{i=1}^{n}$,
\begin{align*}
\p_{i}(x,t)\,\dd x
&=\P(X(t)=x,\, J(t)=i),
\end{align*}
satisfies
\begin{align}\label{evo}
\frac{\partial}{\partial t}\p
=\Psi(t)\DD\Big[\Psi(-t)\L \p\Big]
+{{R}}(t) \p,\quad x\in \domain,\,t>0.
\end{align}
\end{theorem}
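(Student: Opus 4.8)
The plan is to mimic exactly the computation in \eqref{pqr}--\eqref{calc2} from the simplified setting, checking that each step still goes through under the weaker hypotheses. First I would invoke independence of $X$ and $J$ to write the joint density as a product, $\p_i(x,t)\,\dd x = \P(X(t)=x)\P(J(t)=i) = q(x,t)\r_i(t)\,\dd x$, i.e.\ $\p(x,t)=q(x,t)\r(t)$ as an $\R^n$-valued function. Then I would differentiate in $t$ using the product rule, substituting \eqref{rode} for $\partial_t\r$ and \eqref{qeqn} for $\partial_t q$, to obtain
\begin{align*}
\frac{\partial}{\partial t}\p(x,t)
=\r(t)\,\DD\big[\L q(x,t)\big]+q(x,t)\,R(t)\r(t)
=\r(t)\,\DD\big[\L q(x,t)\big]+R(t)\p(x,t).
\end{align*}

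Next I would rewrite the first term. Using \eqref{sop}, $\r(t)=\Psi(t)\r(0)$, so the scalar-valued function $q$ and the space-independent vector $\r(0)$ can be pulled through $\DD$ and $\L$ using the linearity properties \eqref{linear}: $\DD$ is linear over functions of $x$ (so $\r(0)$, which does not depend on $t$, passes through $\DD$), and $\L$ is linear over functions of $t$ (so the time-dependent factors pass through $\L$). Concretely,
\begin{align*}
\r(t)\,\DD\big[\L q\big]
=\Psi(t)\r(0)\,\DD\big[\L q\big]
=\Psi(t)\,\DD\big[\r(0)\,\L q\big]
=\Psi(t)\,\DD\big[\Psi(-t)\,\r(t)\,\L q\big]
=\Psi(t)\,\DD\big[\Psi(-t)\,\L\p\big],
\end{align*}
where the third equality uses $\r(0)=\Psi(-t)\r(t)$, which follows from \eqref{sop} together with the inversion identity \eqref{invert}: applying $\Psi(-t)$ to $\r(t)=\Psi(t)\r(0)$ gives $\Psi(-t)\Psi(t)\r(0)=\r(0)$. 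Combining the two displays yields \eqref{evo}.

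The argument is genuinely short; the only points requiring care are bookkeeping ones. The first is making sure the two clauses of \eqref{linear} are applied to the right variable: $\DD$ acts on $t$ and treats functions of $x$ as scalars, while $\L$ acts on $x$ and treats functions of $t$ as scalars, and the matrix $\Psi(t)$ and vector $\r(t)$ must be moved across the correct operator. The second is the componentwise interpretation of the matrix--vector products: since $\L$ and $\DD$ act entrywise on the $n$-vector $\p$ (or on $\L q$ times a vector), commuting a constant matrix like $\r(0)$ or $\Psi(\pm t)$ through $\DD$ is just linearity applied coordinate by coordinate, and no new structure is needed. I expect the main (modest) obstacle to be simply verifying that $\r(0)=\Psi(-t)\r(t)$ from \eqref{invert}, since this is the one place the hypothesis on $\Psi$ beyond \eqref{sop} is used; everything else is a direct transcription of \eqref{calc}--\eqref{calc2}.
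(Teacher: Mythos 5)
Your proposal is correct and follows essentially the same route as the paper, which simply notes that independence gives $\p(x,t)=q(x,t)\r(t)$ and then repeats the computation of \eqref{calc}--\eqref{calc2} with $e^{\pm Rt}$ replaced by $\Psi(\pm t)$. Your identification of $\r(0)=\Psi(-t)\r(t)$ as the one place where \eqref{invert} enters is exactly right (and note that \eqref{invert} holding for all $t$ gives $\Psi(-t)\Psi(t)=\textup{id}$ directly by replacing $t$ with $-t$).
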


\begin{proof}[Proof of Theorem~\ref{thmind}]
Since $X$ and $J$ are independent, the joint probability density is simply the product,
\begin{align*}
\p(x,t)=q(x,t)\r(t),
\end{align*}
and the proof then follows exactly as in \eqref{calc}-\eqref{calc2} with $e^{\pm Rt}$ replaced by $\Psi(\pm t)$.
\end{proof}

Theorem~\ref{thmind} states that if $X$ and $J$ are independent, then their joint density $\p(x,t)$ satisfies the evolution equations in \eqref{evo}. To investigate the converse of Theorem~\ref{thmind}, assume that the joint density $\p(x,t)$ of $X$ and $J$ satisfies the evolution equations in \eqref{evo}. Now, notice that the product $q(x,t)\r(t)$ also satisfies \eqref{evo} if $q(x,t)$ satisfies \eqref{qeqn} and $\r(t)$ satisfies \eqref{rode}. Therefore, if (i) $\p(x,t)$ and $q(x,t)\r(t)$ satisfy the same initial conditions and boundary conditions (or growth conditions if the domain $\domain$ is unbounded) and if (ii) the solution to equation \eqref{evo} with these initial/boundary conditions is unique, then $\p(x,t)=q(x,t)\r(t)$. Therefore, $X$ and $J$ must be independent. In conclusion, the joint density of $X$ and $J$ satisfies \eqref{evo} if and only if $X$ and $J$ are independent, as long as dependencies between $X$ and $J$ are not imposed at $t=0$ or on the spatial boundary.

\section{Examples\label{examples}}

In this section, we illustrate Theorem~\ref{thmind} by applying it to some examples of interest.

\subsection{Some previous results}

To get the result \eqref{langlands} of Langlands et al.\ in \cite{langlands2008}, then we apply Theorem~\ref{thmind} with
\begin{align*}
V=\R,\quad
\L=K_{\alpha}\frac{\partial^{2}}{\partial x^{2}},\quad
\DD=\D,\quad
\Psi(t)=e^{Rt},
\end{align*}
where $R=R(t)$ is constant in time and $n<\infty$.

\subsection{Fractional Fokker-Planck equations}

To find the evolution equations for fractional Fokker-Planck equations with linear reactions, we apply Theorem~\ref{thmind} with $\L$ given by the Fokker-Planck operator in \eqref{fpo} and $\DD=\D$.

\subsection{Other memory kernels}

Theorem~\ref{thmind} shows that the form of the evolution equations in \eqref{evo} holds for more general operators than the Riemann-Liouville fractional derivative $\D$. For example, we can take the time operator $\DD$ to be the integro-differential operator \cite{sokolov2006b, magdziarz2009, carnaffan2017},
\begin{align}\label{bec}
\DD \varphi(t)
=\frac{\dd}{\dd t}\int_{0}^{t}M(t-t')\varphi(t')\,\dd t',
\end{align}
where $M(t):[0,\infty)\mapsto\R$ is the so-called memory kernel. 

\subsection{Superdiffusion}

Using a continuous-time random walk argument and properties of Fourier-Laplace transforms, Schmidt et al.\ \cite{schmidt2007} found that the reaction and movement terms are decoupled in the reaction-superdiffusion equations for L\'{e}vy flights with a single irreversible reaction. In this case, the equation describing the movement (without reaction) of a single molecule is
\begin{align*}
\frac{\partial}{\partial t}q
=K_{\mu}\Delta_{x}^{\mu/2}q,
\end{align*}
where $\Delta_{x}^{\mu/2}$ is the Riesz symmetric fractional derivative acting on $x$. Therefore, the decoupling of reaction and movement terms in the reaction-superdiffusion equations follows from Theorem~\ref{thmind} upon taking the spatial operator to be $\L=K_{\mu}\Delta_{x}^{\mu/2}$ and the time operator $\DD$ to be the identity.

\subsection{Time-dependent rates}

Theorem~\ref{thmind} allows the reaction rate matrix to vary in time. {In particular, suppose that the reaction rate matrix is some given function of time $\{R(t)\}_{t\ge0}$ (which does not depend on spatial position).} Starting from the result of Langlands et al.\ in \eqref{langlands} for constant reaction rates, one might conjecture that the evolution equations for {such} time-dependent reaction rates are
\begin{align}\label{nguess}
\frac{\partial}{\partial t}\p
=e^{\int_{0}^{t}R(s)\,\dd s}\DD\Big[e^{-\int_{0}^{t}R(s)\,\dd s}\L \p\Big]
+R(t) \p,
\end{align}
where the integration $\int_{0}^{t}R(s)\,\dd s$ is performed component wise. {Indeed, \eqref{nguess} has been used to model some physical systems involving a single irreversible reaction with a time-dependent rate \cite{abad2012, abad2013}.} However, Theorem~\ref{thmind} {shows that the conjecture in \eqref{nguess} can fail}, since the solution operator {$\Psi(\pm t)$} in \eqref{sop} for the equation \eqref{rode} is {not always} given {by the matrix exponential $e^{\pm\int_{0}^{t}R(s)\,\dd s}$}. 

{In fact, the two-state irreversible reaction,
\begin{align}\label{1to2}
1\overset{\lambda(t)}{\to}2,
\end{align}
is a rare case of time-dependent reaction rates for which \eqref{nguess} holds, since this is one of the few instances of time-dependent reaction rates in which $\Psi(\pm t)=e^{\pm\int_{0}^{t}R(s)\,\dd s}$ (see Appendix A.2.4 in \cite{aalen2008}).  To illustrate, suppose $J(t)\in\{1,2\}$ models \eqref{1to2} for some reaction rate $\lambda(t)$, and thus assume that the distribution $\r(t)\in\R^{2}$ satisfies the nonautonomous linear system of ordinary differential equations in \eqref{rode} with time-dependent reaction rate matrix,
\begin{align*}
R(t)
=\begin{pmatrix}
-\lambda(t) & 0 \\
\lambda(t) & 0
\end{pmatrix}\in\R^{2\times2}.
\end{align*}
In this case, one can check that the solution operator in \eqref{sop} is indeed the matrix exponential, 
\begin{align*}
\Psi(t)
=e^{\int_{0}^{t}R(s)\,\dd s}
=\begin{pmatrix}
e^{-\int_{0}^{t}\lambda(s)\,\dd s} & 0\\
1-e^{-\int_{0}^{t}\lambda(s)\,\dd s} & 1
\end{pmatrix},\quad\text{for }t\ge0,
\end{align*}
and $\Psi(-t)=e^{-\int_{0}^{t}R(s)\,\dd s}$ for $t>0$.

However, if the reaction scheme is more complicated than \eqref{1to2} and the reaction rates depend on time, then typically $\Psi(\pm t)\neq e^{\pm\int_{0}^{ t}R(s)\,\dd s}$, and thus Theorem~\ref{thmind} shows that \eqref{evo} holds rather than \eqref{nguess}. For example, suppose \eqref{1to2} is now reversible,
\begin{align*}
1\Markov{\lambda_{2}(t)}{\lambda_{1}(t)}2,
\end{align*}
and thus $\r(t)\in\R^{2}$ satisfies \eqref{rode} with
\begin{align*}
R(t)
=\begin{pmatrix}
-\lambda_{1}(t) & \lambda_{2}(t) \\
\lambda_{1}(t) & -\lambda_{2}(t)
\end{pmatrix}\in\R^{2\times2}.
\end{align*}
In this case, it is straightforward to check that the solution operator is
\begin{align}\label{op445}
\Psi(t)
=\begin{pmatrix}
\Psi_{11}(t) & 1-\Psi_{22}(t)\\
1-\Psi_{11}(t) & \Psi_{22}(t)
\end{pmatrix},\quad t\ge0,
\end{align}
where for $i\in\{1,2\}$ and $t\ge0$,
\begin{align*}
\Psi_{ii}(t)
&=e^{-\int_{0}^{t}(\lambda_{1}(s)+\lambda_{2}(s))\,\dd s}\\
&\quad\times\Big(1+\int_{0}^{t}\lambda_{1-i}(s)e^{\int_{0}^{s}(\lambda_{1}(\sigma)+\lambda_{2}(\sigma))\,\dd \sigma}\,\dd s\Big).
\end{align*}
Also, the condition \eqref{invert} implies that the operator evaluated at a negative time argument is the matrix inverse
\begin{align*}
\Psi(-t)
=(\Psi(t))^{-1},\quad\text{for }t>0.
\end{align*}
Note that matrix $\Psi(t)$ is invertible for each $t\ge0$ since it is triangular and the diagonal entries are nonzero. The matrix exponential in this case is
\begin{align}\label{op446}
e^{\int_{0}^{t}R(s)\,\dd s}
=\begin{pmatrix}
1-\chi_{21}(t) & \chi_{12}(t)\\
\chi_{21}(t) & 1-\chi_{12}(t)\\
\end{pmatrix},\quad t\ge0,
\end{align}
where
\begin{align*}
\chi_{ij}(t)
=\frac{\int_0^t \lambda_{j}(s) \, \dd s \left(1-e^{-\int_0^t (\lambda_{1}(s)+\lambda_{2}(s)) \, \dd s}\right)}{\int_0^t \lambda_{1}(s) \, \dd s+\int_0^t \lambda_{2}(s) \, \dd s}.
\end{align*}
It is straightforward to check that \eqref{op445} and \eqref{op446} are generally not equal, except in special cases (such as constant rates, $\lambda_{j}(t)\equiv\lambda_{j}>0$, or equal rates, $\lambda_{1}(t)=\lambda_{2}(t)$ for all $t\ge0$).

Furthermore, it is not merely the presence of a reversible reaction that can cause \eqref{nguess} to fail. For example, suppose $J(t)\in\{1,2,3\}$ has two irreversible reactions,
\begin{align*}
1\overset{\lambda_{1}(t)}{\to}2\overset{\lambda_{2}(t)}{\to}3,
\end{align*}
and its distribution $\r(t)\in\R^{3}$ satisfies \eqref{rode} with 
\begin{align*}
R(t)
=\begin{pmatrix}
-\lambda_{1}(t) & 0 & 0\\
\lambda_{1}(t) & -\lambda_{2}(t) & 0\\
0 & \lambda_{2}(t) & 0
\end{pmatrix}\in\R^{3\times3}.
\end{align*}
The corresponding solution operator for $t\ge0$ is then
\begin{align*}
\Psi(t)
&=\begin{pmatrix}
e^{-\int_{0}^{t}\lambda_{1}(s)\,\dd s} & 0 & 0\\
\Psi_{21}(t) & e^{-\int_{0}^{t}\lambda_{2}(s)\,\dd s} & 0\\
1-e^{-\int_{0}^{t}\lambda_{1}(s)\,\dd s}-\Psi_{21}(t) & 1-e^{-\int_{0}^{t}\lambda_{2}(s)\,\dd s} & 1
\end{pmatrix},
\end{align*}
where
\begin{align*}
\Psi_{21}(t)
&=e^{-\int_{0}^{t}\lambda_{2}(s)\,\dd s}\int_{0}^{t}\lambda_{1}(s)e^{\int_{0}^{s}(\lambda_{2}(\sigma)-\lambda_{1}(\sigma))\,\dd \sigma}\,\dd s.
\end{align*}
For this example, one can check that 
\begin{align*}
\Psi(t)\neq e^{\int_{0}^{t}R(s)\,\dd s},\quad \text{if }t>0,
\end{align*}
except for special cases, and thus \eqref{nguess} is invalid. 

Summarizing, except for a single irreversible reaction, the evolution equation \eqref{nguess} is typically false for time deppendent rates and is corrected by \eqref{evo} in Theorem~\ref{thmind}.

}

\section{Discussion}

We have given a short and elementary proof of the evolution equations for a general class of systems which can combine anomalous motion with linear reaction kinetics. Our results generalize some previous results in \cite{sokolov2006, henry2006, schmidt2007, langlands2008}. The derivations of these previous results employed a variety of mathematical techniques, including continuous-time random walk theory, Fourier and Laplace transforms, Tauberian theorems, and asymptotic expansions. In light of these previous derivations, one might conclude that the form of the evolution equations depends on these finer details. However, we have shown that the evolution equations follow directly from (i) the independence of the stochastic spatial and discrete processes describing a single particle and (ii) the linearity of the integro-differential operators describing particle motion.

Of course, in the present work and in the previous work \cite{sokolov2006, henry2006, schmidt2007, langlands2008}, the evolution equations are not strictly necessary in the sense that the solution to the equations is merely the product of the distributions of the spatial and discrete processes. Nevertheless, these results are expected to be useful for developing models where the independence assumption breaks down. Indeed, evolution equations of a {very} similar form to \eqref{evo} have been derived in \cite{abad2010, yuste2014} for pure subdiffusion with certain space-dependent reaction rates. Furthermore, we agree with Refs.\ \cite{henry2006, langlands2008} that these results could provide a platform for investigating nonlinear reactions, such as those stemming from mass-action kinetics.

{For example, a natural starting place is an irreversible bimolecular reaction of the form \cite{yuste2001prl}
\begin{align*}
A+A\to\varnothing,
\end{align*}
which describes particles that can annihilate each other. However, while the general form of the evolution equations in \eqref{evo} may be instructive for this nonlinear example, it is clear that the approach of the present work cannot be applied directly. Indeed, the present work relied on the independence of the spatial position and discrete state of a single particle. However, it is clear for this example that a single particle is more likely to be in the discrete ``annihilated'' state if it is in a region of space containing a high concentration of particles. Similarly, if we consider a unimolecular reaction of the form
\begin{align*}
A\overset{\lambda(x)}{\to}\varnothing,
\end{align*}
where the first order rate $\lambda(x)>0$ depends on the spatial position $x$ of the particle, it is clear that the particle is more likely to be in the ``annihilated'' state if it is in a region of space where $\lambda(x)$ is large.
}



\begin{acknowledgments}
The author gratefully acknowledges support from the National Science Foundation (DMS-1944574, DMS-1814832, and DMS-1148230).
\end{acknowledgments}


\bibliography{library.bib}

\end{document}